\newtheorem{thm}{Theorem}[section]
\theoremstyle{definition}
\newtheorem{lem}[thm]{Lemma}
\newtheorem{tm}[thm]{Theorem}
\newtheorem{rem}[thm]{Remark}
\theoremstyle{remark}
\newtheorem*{notn}{Notation}
\title{Purchasing a $C_4$ online}
\author{ Michael Anastos\footnote{e-mail: manastos@andrew.cmu.edu; Research supported in part by NSF grant CCF1555599}\\
Department of Mathematical Sciences,
\\Carnegie Mellon University,
\\Pittsburgh PA 15213.
\date{}}
\begin{document}

\maketitle

\begin{abstract}
Let $G$ be a graph with edge set $(e_1,e_2,...e_N)$. We independently associate to each edge $e_i$ of $G$ a cost ${x}_i$ that is drawn from a Uniform [0, 1] distribution.  Suppose $\mathcal{F}$ is a set of targeted structures that consists of subgraphs of $G$. We would like to buy a subset of $\mathcal{F}$ at small cost, however we do not know a priori the values of the random variables ${x}_1,...,{x}_N$. Instead, we inspect the random variables $x_i$ one at a time. As soon as we inspect the random variable associated with the cost of an edge we have to decide whether we want to buy that edge or reject it for ever.

In the present paper we consider the case where $G$ is the complete graph on $n$ vertices and $\mathcal{F}$ is the set of all $C_4$ -cycles on 4 vertices- out of which we want to buy one.
\end{abstract}
\section{Introduction}
Suppose we associate to the edges of a complete graph on $n$ vertices costs drawn independently from a Uniform [0, 1]  distribution. We then inspect the cost of the edges, one by one. Once we have inspected the cost of an edge, we must decide whether we want to pay the price of that edge and purchase it or reject it forever. If we must buy a certain number of subgraphs lying in a given set of target  structures, what is the minimum expected cost paid over all strategies? Observe that the case where the set of target structures $\mathcal{F}$ is a subset of the edges of the graph and we want to purchase a single edge from $\mathcal{F}$ is closely related to the well-studied secretary problem (see for example \cite{secr} and \cite{opt}), it is attributed to Cayley \cite{cal} by Ferguson \cite{fur} and solved by Moser \cite{mos}.

Frieze and Pegden \cite{pur} studied this problem for various target structures such as matchings, spanning trees and paths between two given vertices. They also studied the cost of purchasing a triangle and a complete graph on $r$ vertices, $K_r$. They have showed that the minimum expected cost of purchasing a triangle in the POM and ROM settings (defined later) are $\Omega\big(n^{-\frac{4}{7}}(\log n)^{-2}\big)$ and $O(n^{-\frac{4}{7}})$ respectively. From those two bounds they generalised only the bound for the minimum expected cost in the ROM setting by showing that the cost a purchasing a $K_r$ is $O(n^{-(d_r+o(1))})$ where $d_r=1/(11 \cdot 2^{r-5}-1)$. They left the analysis  of the cost of purchasing any other subgraph  as an open question.

Unquestionably, the order in which the $x_i$ are examined changes some aspects of the problem and presumably it should affect the expected price paid for a structure. In this paper we will consider the following two models.
\vspace{5mm}
\\{\bf Purchaser Ordered Online Model - POM: } In this model, at each step we are allowed to inspect any uninspected $x_i$ to see if we wish to purchase  the corresponding edge. It is therefore an on-line model where we, the purchaser, choose the order in which the costs are revealed.
\\{\bf Randomly Ordered Online Model - ROM: } In this model, the order in which we inspect the items is determined by a permutation $\pi(1),...,\pi(N)$ that is chosen uniformly at random from $S_N$. At step $t$, the random variable $x_{\pi(t)}$ is revealed and we have to decide whether we want to purchase the corresponding edge $e_{\pi(i)}$ or not.
\vspace{5mm}
\\Note that  we evaluate the purchasing price of buying a subset of $F$ over different probability spaces for the two models. We evaluate it over a product space of the values of the costs for POM but over a product space of the values of the costs and the edge-orderings for ROM. Also observe that the expected cost of purchasing a structure  in the ROM setting  is  larger than  the one in the POM model. That is because we can generate the ROM model by imposing the randomness of the edges at the POM model by ourselves. We thus seek lower bounds for the POM model and upper bounds for the ROM model.
\vspace{5mm}
\\For this paper we let $G$  be the complete graph on $n$ vertices, $N = \binom{n}{2} $
and ${x}_1,...,{x}_N$,
 be independent Uniform [0, 1] random variables that are associated with the costs of edges $e_1,...,e_{N}$ of $G$. Furthermore we let
$K_{C_4}^{POM}$ and $K_{C_4}^{ROM}$  denote the minimum expected cost over all the strategies of purchasing a $C_4$ - cycle of length 4  in the POM and in the ROM models respectively. The main result of this paper is the following.
\begin{tm}\label{main}
$$  c_1n^{-\frac{5}{9}}\log^{-\frac{4}{3}}n   \leq  K_{C_4}^{POM}\leq
K_{C_4}^{ROM}\leq c_2 n^{-\frac{5}{9}}$$
for some constants $c_1,c_2>0$.
\end{tm}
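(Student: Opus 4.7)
The target theorem gives three inequalities. The middle one, $K_{C_4}^{POM}\le K_{C_4}^{ROM}$, requires no new work: a POM purchaser may privately choose a uniformly random edge order and simulate any ROM strategy on it, which couples the two games with identical costs (this is the observation already highlighted in the introduction).

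For the upper bound $K_{C_4}^{ROM}\le c_2n^{-5/9}$, I would present an explicit multi-phase on-line algorithm analysed via the ROM property that the first $tN$ revealed edges form a uniformly random size-$tN$ subset of $E(K_n)$. The guiding structural observation is that a four-cycle $abcd$ decomposes into two cherries $a-b-c$ and $a-d-c$ sharing the endpoint pair $\{a,c\}$, so the algorithm aims to build many cheap cherries in an initial phase and then to close a pair of them into a $C_4$ in later phases. During Phase $1$, running over a time window of length $tN$, buy every revealed edge of cost at most some low threshold $p_1$; by ROM the resulting bought graph $G_1$ looks like a $G(n,\lambda)$-sample, hence has a predictable number of cherries. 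In subsequent phases, buy a revealed edge $uv$ at cost at most a larger threshold $p_2$ only when doing so, together with edges already bought, completes a $C_4$ --- for instance when $uv$ together with an existing cherry in $G_1$ creates a matching cherry for another endpoint pair. Tuning $t$, $p_1$ and $p_2$ (and possibly further layers), and equating the expected total expenditure with the expected cost of the first completed cycle, one is led to the exponent $5/9$.

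For the lower bound $K_{C_4}^{POM}\ge c_1n^{-5/9}(\log n)^{-4/3}$, fix any POM strategy $\mathcal A$, write $H$ for its (random) bought subgraph (which must contain a $C_4$ by hypothesis) and put $c:=c_1n^{-5/9}(\log n)^{-4/3}$. Assuming $\mathbb{E}[\mathrm{cost}]<c$, I would derive a contradiction along the following lines. By Markov's inequality, with probability bounded away from zero all four edges of the purchased $C_4$ have cost $O(c)$, so it suffices to show that with high probability no four edges of cost at most $\alpha c$ bought by $\mathcal A$ form a $C_4$. A cost accounting step bounds $\mathbb{E}|H^\star|$, where $H^\star$ denotes the set of bought edges of cost at most $\alpha c$, in terms of $\mathbb{E}[\mathrm{cost}]/c$. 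A second-moment calculation on the number of $C_4$'s spanned by $H^\star$, combined with a union bound over the $\Theta(n^4)$ candidate $C_4$-positions, then supplies the needed contradiction for $c_1$ small enough; this union bound is precisely what contributes the $(\log n)^{-4/3}$ correction.

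The main obstacle is the lower bound step above. A purely density-based second-moment argument only gives $K_{C_4}^{POM}=\Omega(n^{-1})$, corresponding to the threshold for $C_4$-emergence in $G(n,p)$, and strengthening this to $\Omega(n^{-5/9})$ requires exploiting the on-line, sequential nature of $\mathcal A$: the purchaser cannot use hindsight to concentrate its cheap purchases on the specific edges that happen to align into a $C_4$. Quantifying this adaptivity gap precisely is the technical heart of the argument. For the upper bound the main effort is a standard but careful random-graph / Poissonisation analysis verifying that the Phase-$2$ completion step succeeds with probability $1-o(1)$ within the allocated time window.
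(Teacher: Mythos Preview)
Your handling of the middle inequality is fine. Both the upper and the lower bound, however, have genuine gaps.

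\textbf{Upper bound.} The two-phase threshold strategy you describe does not reach the exponent $5/9$. If $G_1$ consists of all edges of cost at most $p_1$ revealed in the first window, then $G_1$ has $\Theta(n^2p_1)$ edges at total cost $\Theta(n^2p_1^2)$ and spans $\Theta(n^4p_1^3)$ paths of length three; the expected cost of a closing edge in Phase~2 is then $\Theta(n^{-4}p_1^{-3})$. Balancing $n^2p_1^2$ against $n^{-4}p_1^{-3}$ gives $p_1\asymp n^{-6/5}$ and total cost $\Theta(n^{-2/5})$, strictly weaker than $n^{-5/9}$. Writing ``possibly further layers'' does not repair this unless the layers are specified. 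The paper's construction is different in an essential way: it \emph{roots} the purchased structure at a single vertex. In a first window one buys $n_1$ cheap edges incident to vertex~$1$ (cost $\approx n_1^2/n$); in a second window one buys $n_2$ cheap edges each adjacent to exactly one of those and not to vertex~$1$ (cost $\approx n_2^2/(n_1n)$). This produces about $k=n_1n_2$ $P_3$'s with \emph{distinct} endpoint pairs at total cost $O(k^{4/5}/n)$ (Lemma~\ref{upp}). Optimising $k^{4/5}/n+1/k$ over $k$ gives $k\asymp n^{5/9}$ and $K_{C_4}^{ROM}=O(n^{-5/9})$. The rooting is exactly what replaces the exponent $2/3$ by $4/5$ in the cost of acquiring $k$ $P_3$'s, and without it you do not get $5/9$.

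\textbf{Lower bound.} You correctly note that a union bound over the $\Theta(n^4)$ copies of $C_4$ yields only $\Omega(n^{-1})$, and that the improvement to $\Omega(n^{-5/9})$ must exploit the on-line constraint---but you then stop, labelling this ``the technical heart'' without supplying it. The paper's argument is not a second-moment computation on $C_4$'s at all. It first observes that any strategy must own some number $\ell$ of $P_3$'s before it can buy a closing edge, and that w.h.p.\ the $\ell$-th potential closing edge costs at least $1/(\ell\log^2 n)$. Independently, Theorem~\ref{second} shows that w.h.p.\ \emph{every} strategy pays at least $\Omega\bigl(\ell^{4/5}/(n\log^{4/5}n)\bigr)$ to own $\ell\ge n^{1/2}$ $P_3$'s. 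This is where the real work lies: its proof (Lemma~\ref{lowhard}) partitions vertices into high-degree (``L'') and low-degree (``S'') classes, runs through four cases according to which internal vertices of the purchased $P_3$'s are L or S, and in each case lower-bounds the cost of the relevant purchased edge set via the deterministic w.h.p.\ estimates of Lemmas~\ref{sec} and~\ref{subgr}. Optimising $\ell^{4/5}/(n\log^{4/5}n)+1/(\ell\log^2 n)$ over $\ell$ then yields $\Omega(n^{-5/9}\log^{-4/3}n)$. None of this structure---the reduction to the cost of purchasing many $P_3$'s, the degree-based case analysis, or the supporting cost lemmas---is present in your sketch.
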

 Observe that any strategy that succeeds in  purchasing a $C_4$ will purchase at least one path of length 3 before  purchasing a $C_4$. Presumably,  the number  of distinct paths of length 3 at any optimal strategy should depend on the order of the graph $n$ and tend to  infinity as $n$ tends to infinity. In light of this assumption a large portion of this paper is devoted to proving the following result.
\begin{tm}\label{second}
Let $K_{k,P_3}^{POM}$, $K_{k,P_3}^{ROM}$ denote the optimal expected cost of purchasing $k$ distinct paths of length 3 in the POM and ROM settings respectively. Then for $n^{0.5}\leq k\leq n$,
$$ c_3 \frac{k^{0.8}}{ n\log^{0.8}n}\leq K_{k,P_3}^{POM}\leq K_{k,P_3}^{ROM}\leq c_4\frac{k^{0.8}}{n} ,$$
for some constants $c_3,c_4\geq 0$.
\end{tm}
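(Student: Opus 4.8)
The plan is to prove both inequalities around a single heuristic: the cheapest way to manufacture many copies of $P_3$ is a \emph{broom}, by which I mean one hub vertex $v_0$ of degree $\Delta$ together with $d$ further pendant edges attached at each of the $\Delta$ neighbours of $v_0$. Writing $\#P_3(H)$ for the number of copies of $P_3$ in a graph $H$, such a broom has $\#P_3\asymp\Delta^2 d$ — the paths $a\,v_0\,u\,b$ with $a,u\in N(v_0)$, $a\ne u$ and $b$ a pendant of $u$, which are readily checked to be pairwise distinct as subgraphs — it spans $\asymp\Delta d$ vertices and edges, and buying it online costs $\asymp n^{-1}(\Delta^2+\Delta d^2)$, since at each relevant vertex one essentially pays for its cheapest incident edges (an order-statistics estimate). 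Minimising this cost subject to $\Delta^2 d\asymp k$ gives $\Delta\asymp k^{2/5}$, $d\asymp k^{1/5}$, cost $\asymp k^{4/5}/n$; the hypothesis $k\ge n^{1/2}$ is precisely what makes $\Delta$, $d$ and the relevant binomial degree variables large enough to concentrate, while $k\le n$ keeps $\Delta d\asymp k^{3/5}$ below $n$.

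\textbf{Upper bounds.} It suffices to exhibit a ROM strategy of expected cost $O(k^{4/5}/n)$, since $K^{POM}_{k,P_3}\le K^{ROM}_{k,P_3}$. In POM one builds a broom by hand: inspect the edges at a fixed $v_0$ and buy its $\approx\Delta$ cheapest ones, then for each resulting neighbour inspect its incident edges and buy its $\approx d$ cheapest ones; an order-statistics computation bounds the expected cost by $O(k^{4/5}/n)$ and a Chernoff bound gives $\ge k$ copies of $P_3$ with probability $1-o(1)$. In ROM one cannot choose the inspection order, and a single-threshold strategy would push \emph{every} vertex to the bristle-degree $d$, costing $\asymp d^2\gg k^{4/5}/n$; instead I would use a time-split strategy — during the first half of the random arrival sequence buy every edge at $v_0$ of cost $\lesssim\Delta/n$, obtaining a set $U$ of $\Theta(\Delta)$ neighbours, and during the second half buy every edge incident to $U$ of cost $\lesssim d/n$. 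Because the second phase only ever spends on edges touching the already-identified small set $U$, there is no wasted expenditure and the expected cost is again $O(k^{4/5}/n)$, with $\ge k$ copies of $P_3$ with probability $1-o(1)$; in the polynomially small failure event one buys every remaining edge, which adds only $o(1)$ to the expectation.

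\textbf{Lower bound.} Fix any strategy and let $F$ be the (random) graph of purchased edges, so $\#P_3(F)\ge k$. I would combine two ingredients. First, an online cost estimate: to buy the $\deg_F(v)$ edges at a vertex $v$ one pays at least the sum of the $\deg_F(v)$ smallest of the $n-1$ i.i.d.\ uniform costs at $v$, and using simultaneous concentration of these order statistics over all $v$ and summing (each edge counted twice) gives $\mathrm{cost}(F)\gtrsim n^{-1}\sum_v\deg_F(v)^2$, in particular $\mathrm{cost}(F)\gtrsim n^{-1}\Delta(F)^2$. Second, the combinatorial bound $\#P_3(F)\le 2\sum_{uv\in E(F)}\deg_F(u)\deg_F(v)$. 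In the broom regime — when the $P_3$-count of $F$ is dominated by paths through a single high-degree vertex $v_0$ — this gives $\#P_3(F)\lesssim \deg_F(v_0)\sum_{u\sim v_0}\deg_F(u)\le \deg_F(v_0)^{3/2}\big(\sum_v\deg_F(v)^2\big)^{1/2}$ by Cauchy--Schwarz, and feeding in the cost estimate applied to $v_0$ and globally yields $k\lesssim \big(n\,\mathrm{cost}(F)\big)^{5/4}$, i.e.\ $\mathrm{cost}(F)\gtrsim k^{4/5}/n$.

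\textbf{The main obstacle.} The combinatorial bound above is far from tight in general — a clique on $\asymp k^{1/4}$ vertices, or a $G(n,p)$-like graph, realises $\asymp k$ copies of $P_3$ with a much smaller $\sum_v\deg^2$, so by itself the clean chain would only give $\mathrm{cost}\gtrsim k^{3/4}/n$. What closes the gap is that such denser structures are themselves costly to assemble online: a $G(n,p)$-like graph with $\asymp k$ copies of $P_3$ already costs $\gtrsim k^{2/3}n^{-2/3}\ge k^{4/5}n^{-1}$, while purchasing all edges of an $m$-vertex clique requires, with constant probability, an edge of cost $\approx n^{-\Theta(1/m)}$, which is bounded below by a positive constant once $m$ grows with $n$. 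Hence a cost budget of $o(k^{4/5}/n)=o(1)$ forces $F$ to have no affordable dense spot, so that $F$ is, up to lower-order terms, a sparse (and therefore comparatively expensive) bulk together with a few broom-like pieces to which the estimates above apply. Making ``no affordable dense spot'' quantitative, via the thresholds for the appearance of small dense subgraphs in $G(n,p)$, is the technical heart of the proof and is where the extra $\log^{4/5}n$ factor in the lower bound is lost.
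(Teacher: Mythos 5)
Your upper bound is essentially the paper's: a depth-$2$ tree (a ``broom'' with a hub of degree $\Delta\asymp k^{2/5}$ and $\asymp k^{3/5}$ second-layer edges), implemented in ROM by splitting the random arrival sequence into phases with thresholds $\asymp\Delta/n$ and $\asymp d/n$ respectively. Your optimisation of $n^{-1}(\Delta^2+\Delta d^2)$ subject to $\Delta^2 d\asymp k$ reproduces the paper's minimisation of $n_1^2/n+n_2^2/(n_1n)$ subject to $n_1n_2\approx k$, and the Chernoff/cleanup argument is the same.

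The lower bound, however, has a genuine gap, and you come close to naming it yourself. Two problems. First, the estimate $\mathrm{cost}(F)\gtrsim n^{-1}\sum_v\deg_F(v)^2$ cannot hold as stated: for a vertex of degree $O(1)$ the cheapest incident edge costs $\Theta(1/n)$ only in expectation, and a union bound over adaptive strategies simply does not go through below a $\log^2 n$ degree threshold. This is exactly why the paper introduces L-vertices (degree $\ge 2\log^2 n$) and S-vertices, proves a version of the degree-square cost bound only for edge sets concentrated on few L-vertices (Lemma~\ref{sec}), and is then forced to treat the case where most $P_3$'s live entirely on S-vertices by a completely different argument (the $E_1,\dots,E_4$ decomposition of Case~2) — an argument your outline has no analogue of. Second, and more seriously, your Cauchy--Schwarz chain only works ``in the broom regime,'' and your proposed repair — ``a cost budget of $o(k^{4/5}/n)$ forces $F$ to have no affordable dense spot, so $F$ is a sparse bulk plus a few broom-like pieces'' — is precisely the missing proof, not a step in it. You give back-of-envelope checks for two candidate adversaries ($G(n,p)$-like and a clique), but you do not show that these exhaust the alternatives, nor how to reduce an arbitrary $F$ to the broom case. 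The paper closes this by a four-way case split on the L/S structure of the $P_3$'s, and in the hardest case (both internal vertices L, Case~4) runs an explicit $\log n$-round peeling of the L-vertex set into layers $B_1,\dots,B_{\log n}$ of bounded degree within the remaining L-set, applying the degree-square bound layer by layer and paying the $\log^{0.8}n$ factor through pigeonholing over the $2\log n$ layers; your attribution of the log loss to the ``dense spot'' argument also mislocates where it comes from. As stated, your lower bound proves $\mathrm{cost}\gtrsim k^{3/4}/n$ in general and $k^{4/5}/n$ only in a regime you have not shown to be the only one that matters.
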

\begin{notn}
We use $P_3$ as an abbreviation of the term \emph{path of length 3}.
\end{notn}
\section{Preliminaries}
For future reference we state the Chernoff bounds in the following form. Let $X$ be distributed as a binomial $Bin(n,p)$ random variable and let $\mu=np$. Then, for any $\epsilon>0$ we have
\begin{align}
Pr[X \leq (1-\epsilon)\mu] \leq e^{-\frac{\epsilon^2\mu}{2}}, \\
Pr[X \geq (1+\epsilon)\mu] \leq e^{-\frac{\epsilon^2\mu}{2+\epsilon}}.
\end{align}
Furthermore we are going to use the following fact. Let $X_1, X_2,...,X_r$ be independent Uniform [0, 1] random variables then, for any $\theta \geq 0$ we have
\begin{align}\label{un}
Pr(X_1+X_2+...+X_r \leq \theta) \leq \frac{\theta^r}{r!}.
\end{align}
\section{Proof of the upper bound of Theorem \ref{second}.}

\begin{lem}\label{upp}
 For $n^{0.5} \leq k \leq n$ we have,
$$K_{k,p_3}^{ROM} \leq   \frac{4{k}^{0.8}}{n}. $$
\end{lem}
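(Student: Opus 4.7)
The plan is to use a ``star plus extensions'' strategy anchored at a fixed vertex $v_0$. Set $s=\lceil(3k^2)^{1/5}\rceil$ and $T=\lceil k/s\rceil$, and write $\tau_1=s/(n-1)$, $\tau_2=2T/(sn)$. Inspect the edges in the ROM order. For an edge $v_0v_i$, purchase it iff its cost is at most $\tau_1$; call $v_i$ a \emph{leaf} in that case and write $L$ for the (random) set of leaves. For an edge $v_iw$ with $v_i,w\neq v_0$, purchase it iff its cost is at most $\tau_2$ and at least one of $v_0v_i,v_0w$ has already been purchased when $v_iw$ is inspected; call such a purchase an \emph{extension}. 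Every star edge $v_0v_j$ combined with a purchased extension $v_iw$ for $v_i\in L$, $j\neq i$, $w\notin\{v_0,v_j\}$ yields a distinct $P_3$ of the form $v_j-v_0-v_i-w$.

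For the $P_3$ count, $|L|\sim\mathrm{Bin}(n-1,\tau_1)$ has mean $s$, and conditional on $v_i\in L$ the number of extensions incident to $v_i$ stochastically dominates $\mathrm{Bin}(n-2,\tau_2/2)$, the factor $\tfrac12$ reflecting the probability that $v_0v_i$ precedes $v_iw$ in the random order. Each such extension produces $|L|-O(1)$ distinct $P_3$'s, so applying Chernoff (1) both to $|L|$ and to the total number of extensions yields at least $k$ distinct $P_3$'s with probability $1-\exp(-\Omega(k^{2/5}))$; a small constant-factor inflation of $s,T$ absorbs the losses $|L|\ge s/2$, total extensions $\ge T/2$, and the lower-order $w\in L$ coincidences (whose expected number is $T\cdot|L|/n=o(T)$ in the range $k\le n$).

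For the cost, the star contributes $(n-1)\tau_1^2/2=s^2/(2(n-1))$. For the extensions, each non-$v_0$ edge $v_iw$ is purchased with probability at most $2\tau_1\tau_2$ by a union bound over the two possible trigger edges $v_0v_i,v_0w$, so the expected extension cost is at most $\binom{n-1}{2}\cdot 2\tau_1\tau_2\cdot(\tau_2/2)\le n^2\tau_1\tau_2^2/2=2T^2/(sn)$. Summing and substituting $T=k/s$, the total expected cost is at most $s^2/(2n)+2k^2/(ns^3)$, which is minimised at $s=(3k^2)^{1/5}$ and gives roughly $(3^{2/5}/2+2\cdot 3^{-3/5})\,k^{4/5}/n\approx 1.8\,k^{4/5}/n$, comfortably below the claimed $4k^{4/5}/n$.

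The main technical obstacle is the low-probability event that the strategy terminates with fewer than $k$ distinct $P_3$'s. I would pair it with a trivial fallback: on failure, apply the flat threshold $\tau=(2k/n^4)^{1/3}$ to every remaining edge, which by (\ref{un}) and Chernoff produces a graph with at least $k$ $P_3$'s with probability $1-o(1)$ and worst-case cost $O(k^{2/3}/n^{2/3})=O(1)$. Since the failure probability of the main strategy is $\exp(-\Omega(k^{2/5}))\le\exp(-\Omega(n^{1/5}))$ throughout $n^{1/2}\le k\le n$, the fallback contributes only exponentially little to the expectation, and the overall expected cost stays within $4k^{4/5}/n$.
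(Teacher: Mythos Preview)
Your approach is the same depth-two-tree idea as the paper's (a star at a fixed root plus cheap extensions, with root degree $s\asymp k^{2/5}$), and the cost optimisation matches. The gap is in how you handle the ROM ordering.

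The stochastic-domination step is false. Conditional on $v_i\in L$, the number of extensions at $v_i$ triggered by ``$v_0v_i$ precedes $v_iw$'' does \emph{not} dominate $\mathrm{Bin}(n-2,\tau_2/2)$: if the position $P$ of $v_0v_i$ in the random order happens to lie in, say, the last $10\%$ of $[N]$, then at most about $10\%$ of the edges $v_iw$ come after it, so with probability $\approx 0.1$ the extension count at $v_i$ has conditional mean at most $0.1(n-2)\tau_2$. Formally, the per-leaf count is a mixture over $P$ of binomials with parameter $\tau_2(N-P)/(N-1)$, and such a mixture does not stochastically dominate the binomial at the averaged parameter. Hence the per-leaf counts do not concentrate, and they are not independent across leaves (they share the star edge $v_0v_i$), so Chernoff does not apply to the total either. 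A second-moment computation on the total $X=\sum_{i,w}I(i,w)$ gives $\mathrm{Var}(X)=\Theta(T^2/s)$, the dominant covariances coming from pairs $I(i,w),I(i,w')$ sharing $v_0v_i$; this yields failure probability only $O(1/s)=O(k^{-2/5})$. At $k=n^{1/2}$ that is $n^{-1/5}$, and even an $O(1)$ fallback then contributes $n^{-1/5}\gg 4k^{0.8}/n=4n^{-0.6}$, so the stated bound is not achieved. (Separately, your fallback has no edges left to act on, since the main strategy already inspects the entire ROM stream.)

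The paper sidesteps this by splitting the stream into explicit phases: the first $N/3$ edges are used only to buy star edges at vertex $1$, the next $N/3$ only to buy extensions. Every purchased star edge then deterministically precedes every candidate extension, so the star-edge count and the extension count are each honest binomials; Chernoff gives failure probability $\le n^{-5}$, and the crude fallback ``buy all remaining edges'' at cost $\le N$ contributes $O(n^{-3})$, well inside the target. Introducing such a phase split into your strategy would repair the argument with no change to the cost calculation.
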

\begin{proof}
For the proof of this lemma we show that the upper bound is obtained by a  strategy that buys a tree of depth 2. The strategy is the following. Consider that the first $N/3$ edges that are examined have color red, the next $N/3$ have color blue  and the last $N/3$ have color green. While inspecting  red edges we buy any edge that is adjacent to vertex 1 and has cost at most $\frac{4n_1}{n}$ until we buy $n_1$ of them.  The value of $n_1$ is going to be revealed shortly. Thereafter, we buy any blue edge that is adjacent to exactly one red edge, not incident to vertex 1 and has cost at most $\frac{4n_2}{n_1n}$ until we buy another $n_2$ edges. In the case that after we have examined the cost of an edge we are in the situation where the purchased edges span $r$ $P_3$'s and there are only $k-r$ $P_3$'s that have an unexamined edge then, we purchase the rest of the edges. Similarly, if  after examining the first 2N/3 edges we haven't purchased $k$ $P_3$'s we purchase the rest of the edges.
\vspace{5mm}
\\ The number of red edges that cost at most  $\frac{4n_1}{n}$ and the number of blue edges that cost at most $\frac{4n_2}{n_1n}$ follow $Bin(\frac{N}{3},\frac{4n_1}{n})$ and $Bin(\frac{N}{3},\frac{4n_2}{n_1n})$ distributions respectively. Hence,
Chernoff bounds(1) imply that with probability at least $1-n^{-5}$ we succeed in buying the requested number of edges in the case that $n_1,n_2 = n^{\Omega(1)}$. In addition, since  cost $x_i$ follows Uniform [0, 1] distribution
we have,  $\mathbb{E}[x_i \vert x_i \leq \chi ] = \chi/2$ for any $\chi \in[0, 1]$. Thus, in the event that we succeed in buying the requested number of edges, let it be $\mathcal{E}$, the expected cost of any red edge and the expected cost of any blue edge is $2n_1/n$ and $2n_2/(n_1n)$ respectively. Hence, the expected amount paid by the above strategy, conditioned on   $\mathcal{E}$ is
\begin{equation}\label{upper}
n_1\cdot\frac{2n_1}{n}+ n_2\cdot\frac{2n_2}{n_1n}.
\end{equation}
In the event $\mathcal{E}$, any $P_3$ that is purchased consists of a blue edge plus the red edge that is adjacent to that blue edge and incident to vertex 1 plus any other red edge. Thus, in the event $\mathcal{E}$ the above strategy purchases $n_2(n_1-1)$ $P_3$'s.
By setting $k=n_2(n_1-1)$ we choose $n_1\approx (3k^2/2)^{\frac{1}{5}}$ to minimise expression (\ref{upper}). For $n_1\approx (3k^2/2)^{\frac{1}{5}}$, the quantity given in (\ref{upper}) is bounded above by
 $\frac{3.93{k}^{0.8}}{n}$.
 Therefore,
\begin{align*}
 K_{k,p_3}^{ROM} &\leq \frac{3.93{k}^{0.8}}{n}\cdot Pr(\mathcal{E})+ N\cdot Pr(\bar{\mathcal{E}})
 \leq \frac{3.93{k}^{0.8}}{n}+ N\cdot n^{-5} \leq \frac{{k}^{0.8}}{n}.
 \qedhere
\end{align*}
\end{proof}
\section{Proof of the lower bound of Theorem \ref{second}.}
At the proof of the lower bound, given at Lemma \ref{lowhard}, we make use of Lemmas \ref{sec} and \ref{subgr}. The two Lemmas provide us w.h.p. lower bounds
of the amounts that every strategy has to pay in order to purchase the edges that span  specific subgraphs. 
\begin{notn}
For $E \subset E(G)$ we let $c(E)= \sum_{e_i\in E}x_i$ i.e. the total cost of edges in $E$. Furthermore for a subgraph $H\subseteq G$ we let $c(H)=c\big(E(H)\big)$.
\end{notn}
\begin{lem}\label{sec}
Let $A$ be the event that for every $\alpha\in [n]$ and $\beta \in [n^2]$ with $\beta \geq \alpha \log^2n (=\ell) $ there does not exist a set $F$ of $\alpha$ vertices and a set $H$ of $\beta$ edges such that every edge in $H$ is incident to a vertex in $F$ and $c(H) \leq \frac{\beta^2}{10\alpha n}.$ Then, $Pr(A)=1-o(1)$.
\end{lem}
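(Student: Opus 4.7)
The plan is a first--moment union bound driven by the tail estimate (\ref{un}) for sums of Uniform$[0,1]$ variables. For a fixed pair $(F,H)$ with $|F|=\alpha$, $|H|=\beta$, and every edge of $H$ incident to a vertex of $F$, the sum $c(H)$ consists of $\beta$ i.i.d.\ Uniform$[0,1]$ costs, so (\ref{un}) with $\theta=\beta^{2}/(10\alpha n)$ gives
\[
Pr\!\left(c(H)\leq\frac{\beta^{2}}{10\alpha n}\right)\;\leq\;\frac{1}{\beta!}\!\left(\frac{\beta^{2}}{10\alpha n}\right)^{\!\beta}.
\]
Since the number of edges of $K_{n}$ incident to $F$ is at most $\alpha n$, the number of admissible pairs $(F,H)$ for a given $(\alpha,\beta)$ is at most $\binom{n}{\alpha}\binom{\alpha n}{\beta}$ (and in particular the event is vacuous for $\beta>\alpha n$).

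I would then apply the standard estimates $\binom{n}{\alpha}\leq(en/\alpha)^{\alpha}$, $\binom{\alpha n}{\beta}\leq(e\alpha n/\beta)^{\beta}$, and $\beta!\geq(\beta/e)^{\beta}$. After multiplying and cancelling, the union-bound estimate for a single $(\alpha,\beta)$ collapses to
\[
\left(\frac{en}{\alpha}\right)^{\!\alpha}\!\left(\frac{e^{2}}{10}\right)^{\!\beta}.
\]
The crucial analytic fact is that $e^{2}/10<1$: writing $\lambda:=\log(10/e^{2})>\tfrac{3}{10}$ and invoking the hypothesis $\beta\geq\alpha\log^{2}n$, the logarithm of the above is at most $\alpha\log(en)-\lambda\alpha\log^{2}n\leq-\tfrac{\lambda}{2}\alpha\log^{2}n$ for $n$ large. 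Hence for every admissible $(\alpha,\beta)$ the union-bound term is at most $\exp(-\Omega(\log^{2}n))=n^{-\omega(1)}$, and summing over the at most $n^{3}$ pairs $(\alpha,\beta)\in[n]\times[n^{2}]$ yields $Pr(\bar A)=o(1)$.

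The only delicate point is ensuring that the constant $\tfrac{1}{10}$ in the cost threshold is strong enough to beat both entropy factors $\binom{n}{\alpha}$ and $\binom{\alpha n}{\beta}$; this is exactly the strict inequality $e^{2}/10<1$, and in fact any constant strictly below $e^{-2}\approx 0.135$ would suffice. The $\log^{2}n$ slack in the hypothesis $\beta\geq\alpha\log^{2}n$ plays a separate role: it absorbs the $(en/\alpha)^{\alpha}$ cost of selecting the vertex set $F$, and a weaker hypothesis such as $\beta\geq C\alpha$ would not suffice for uniformity over $\alpha$.
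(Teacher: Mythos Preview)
Your proof is correct and follows essentially the same route as the paper: a union bound over all $(\alpha,\beta,F,H)$, the tail estimate (\ref{un}), and the Stirling-type bounds $\binom{\alpha n}{\beta}\le(e\alpha n/\beta)^{\beta}$ and $\beta!\ge(\beta/e)^{\beta}$ to reduce each summand to $(en/\alpha)^{\alpha}(e^{2}/10)^{\beta}$, after which the hypothesis $\beta\ge\alpha\log^{2}n$ absorbs the vertex-entropy factor. The only cosmetic difference is that the paper uses the cruder bound $\binom{n}{\alpha}\le n^{\alpha}$ and rewrites the summand as $(n^{1/\log^{2}n}\cdot e^{2}/10)^{\beta}$ before summing, while you take logarithms directly; both reach $Pr(\bar A)\le n^{3}\cdot n^{-\omega(1)}=o(1)$.
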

\begin{proof}
Let $S$ be the set of all quadruples $(\alpha, \beta, F, H)\in [n] \times [n^2] \times V(G) \times E(G)$ such that
$\beta \geq \alpha \log^2n$,
$\vert F \vert=\alpha$, $\vert H \vert =\beta$ and
every edge in $H$ has an endpoint in $F$.
For fixed $\alpha, \beta $ there at most $\binom{n}{ \alpha}$ sets of $\alpha$ vertices $F$ each defining at most $\binom{\alpha n }{ \beta}$ sets of $\beta $ edges $H$ such that every edge in $H$ has an endpoint in $F$.
Hence, for each pair $\alpha, \beta$ there exist at most $\binom{n}{\alpha}\binom{\alpha n}{ \beta}$ pairs $H, F$ such that $(\alpha, \beta, F, H)\in S$. In addition, for $(\alpha, \beta, F, H)\in S$, (\ref{un}) implies that
$$Pr\bigg[c(H) \leq\frac{\beta^2}{10\alpha n}\bigg] \leq \bigg(\frac{\beta^2}{10\alpha n}\bigg)^{\vert H \vert}\bigg/\vert H \vert! = \bigg(\frac{\beta^2}{10\alpha n}\bigg)^{\beta}\bigg/\beta! .$$
Therefore, by taking a union bound over the quadruples in $S$  we get,
\begin{align*}
1-Pr(A)&=Pr\bigg(\exists (\alpha, \beta, F, H)\in S: c(H)\leq \frac{\beta^2}{10\alpha n} \bigg) \leq \underset{\alpha=1}{\overset{n}{\sum}} \underset{\beta = \ell}{\overset{\alpha n}{\sum}}\binom{n}{\alpha}\binom{\alpha n}{\beta}\frac{(\frac{\beta^2}{10\alpha n})^\beta}{\beta!}
\\&\leq   \underset{\alpha=1}{\overset{n}{\sum}} \underset{\beta= \ell}{\overset{\alpha n}{\sum}} n^\alpha\bigg(\frac{e\alpha n}{\beta}\bigg)^\beta\frac{\big(\frac{\beta^2}{10\alpha n}\big)^\beta}{\big(\frac{\beta}{e}\big)^\beta}
\leq   \underset{\alpha=1}{\overset{n}{\sum}} \underset{\beta= \ell}{\overset{\alpha n}{\sum}}
\big(n^{\frac{\alpha}{\beta}}\big)^\beta
\bigg( \frac{e^2}{10}\bigg)^\beta
\\ &\leq   \underset{\alpha=1}{\overset{n}{\sum}} \underset{\beta= \ell}{\overset{\alpha n}{\sum}}\bigg(n^{\frac{1}{\log^2 n}} \frac{e^2}{10}\bigg)^{\ell}
\leq n^3 \bigg( \frac{e^{2+o(1)}}{10}\bigg)^{\log^2n}  =o(1).
\qedhere
\end{align*}
\end{proof}
\begin{lem}\label{subgr}
Let $B$ be the event that for every subgraph  $H\subseteq G$ with at least 90 edges and average degree larger than 90 we have $c(H)\geq n^{-\frac{1}{15}}$. Then, $Pr(B)=1-o(1)$.
\end{lem}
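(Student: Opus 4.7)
The plan is a first-moment union bound over all candidate subgraphs $H$, leveraging the average-degree hypothesis to compensate for the large number of such subgraphs.

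I would first group the candidates by the pair $(v,m)$ where $v=|V(H)|$ (counting only vertices incident to an edge of $H$) and $m=|E(H)|$. Fix such a pair with $m\geq 90$ and $2m/v>90$, so $v<m/45$; note that in a simple graph this forces $v\geq 92$, whence $m>45\cdot 92$ and the condition $m\geq 90$ is automatic. The number of candidate subgraphs with parameters $(v,m)$ is at most $\binom{n}{v}\binom{\binom{v}{2}}{m}$, and by \eqref{un} the probability that a fixed one satisfies $c(H)\leq n^{-1/15}$ is at most $(n^{-1/15})^{m}/m!$.

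Applying the standard estimates $\binom{a}{b}\leq(ea/b)^{b}$ and $m!\geq(m/e)^{m}$, each $(v,m)$-term of the union bound is bounded above by
\[
\Bigl(\frac{en}{v}\Bigr)^{v}\Bigl(\frac{e^{2}v^{2}}{2m^{2}}\Bigr)^{m}n^{-m/15}.
\]
The bound $v<m/45$ gives $v^{2}/m^{2}<1/2025$, so the middle factor is at most $(e^{2}/4050)^{m}<(1/500)^{m}$. The same bound gives $m/15>3v$, hence $n^{-m/15}<n^{-3v}$, and combining with $(en/v)^{v}\leq(en)^{v}$ yields $(en/v)^{v}n^{-m/15}\leq e^{v}n^{-2v}\leq n^{-v}$ for $n$ large. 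So each $(v,m)$-term is at most $n^{-v}(1/500)^{m}$, and the resulting double sum over $v\geq 92$ and $m>45v$ is trivially $o(1)$, which gives $\Pr(B)=1-o(1)$ as required.

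The step I expect to require the most care is the arithmetic balancing: the payoff of the average-degree hypothesis enters twice, once via the geometric decay in the middle factor and once in the interplay of $\binom{n}{v}\leq n^{v}$ with $n^{-m/15}$. Both savings are needed, and the specific constants $90$ (for the average degree) and $1/15$ (for the cost threshold) are tuned so that $m>45v$ turns $n^{-m/15}$ into a factor of order $n^{-2v}$, comfortably dominating the $n^{v}$ cost of enumerating vertex sets.
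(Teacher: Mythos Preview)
Your proof is correct and follows essentially the same approach as the paper: a union bound over subgraphs grouped by $(v,m)$, combined with \eqref{un} and the standard binomial estimates, using the average-degree hypothesis $v<m/45$ to control both the $n^{v}$ enumeration cost and the $(v/m)^{2m}$ factor. The paper organizes the computation slightly differently---it first proves a general lower bound $c(H)\geq g(|V(H)|,|E(H)|)$ valid for \emph{all} subgraphs and then evaluates $g$ on the high-average-degree ones---but the underlying arithmetic is the same, and your direct version is arguably cleaner.
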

\begin{proof}
For fixed $\alpha \in[n]$, and$ \beta \in [\binom{\alpha}{2}]$ there are at most $\binom{n }{ \alpha}$ ways to choose a set of $\alpha$ vertices and thereafter, at most $\binom{\binom{\alpha}{2}}{ \beta}$ ways to choose $\beta$ edges spanned by those vertices. Hence, $G$ has at most $\binom{n}{\alpha} \binom{\binom{\alpha }{2}}{\beta} $ subgraphs consisting of $\alpha$ vertices and $\beta$ edges.
Set $g(\alpha,\beta)= n^{-\frac{4}{\beta}}\big(\frac{\alpha}{en}\big)^{\frac{\alpha}{\beta}}\big(\frac{\beta}{e\alpha}\big)^2$. Thus,  by using (\ref{un}) and taking a union bound over all the subgraphs of $G$ we have,
\begin{align}\label{ppr}
\begin{split}
Pr\big( \exists H\subseteq G  &: c(H) \leq
g(\vert V(H) \vert, \vert E(H)\vert) \big)
\leq \underset{\alpha=1}{\overset{n}{\sum}} \underset{ \beta= 1}{\overset{\alpha n}{\sum}} \binom{n}{ \alpha}\binom{\binom{\alpha}{2}}{ \beta}\frac{g^{\beta}(\alpha,\beta)}{\beta!}  \\
&\leq \underset{\alpha=1}{\overset{n}{\sum}} \underset{ \beta= 1}{\overset{\alpha n}{\sum}} \bigg(\frac{en}{\alpha}\bigg)^\alpha \bigg(\frac{e\alpha^2}{2\beta}\bigg)^\beta\bigg(\frac{e\cdot g(\alpha,\beta)}{\beta}\bigg)^\beta  \\
&\leq \underset{\alpha=1}{\overset{n}{\sum}} \underset{ \beta= 1}{\overset{\alpha n}{\sum}} \bigg(\frac{en}{\alpha}\bigg)^\alpha \bigg(\frac{e\alpha^2}{2\beta}\bigg)^\beta\bigg(\frac{e }{\beta}\cdot n^{-\frac{4}{\beta}}\bigg(\frac{\alpha}{en}\bigg)^{\frac{\alpha}{\beta}}\bigg(\frac{\beta}{e\alpha}\bigg)^2\bigg)^\beta \\
&=\underset{\alpha=1}{\overset{n}{\sum}} \underset{ \beta= 1}{\overset{\alpha n}{\sum}}2^{-\beta}n^{-4}<n^{-1}.
\end{split}
\end{align}
Let $\mathcal{H}$ be the set of all subgraphs of $G$  that have at least 90 edges and average degree larger than 90. Then, $H \in \mathcal{H}$ implies that $\vert E(H) \vert / \vert V(H) \vert >90/2=45$ (since $H$ has average degree 90) or equivalently that  $\vert V(H) \vert / \vert E(H) \vert < 1/45$. Thus, (\ref{ppr}) implies that with probability at least $1-o(1)$ for every $H \in \mathcal{H}$ we have,
\begin{align*}
c(H)
&\geq g(\vert V(H) \vert, \vert E(H)\vert)
=n^{-\frac{4}{\vert E(H) \vert}}\bigg(\frac{\vert V(H) \vert}{en}\bigg)^{\frac{\vert V(H) \vert }{\vert E(H) \vert }}\bigg(\frac{\vert E(H) \vert }{e\vert V(H) \vert }\bigg)^2 \\
& \geq n^{-\frac{4}{90}}\bigg(\frac{\vert V(H) \vert }{en}\bigg)^{\frac{1}{45}}\frac{45^2}{e^2} \geq
n^{-\frac{4}{90}}\bigg(\frac{1}{en}\bigg)^{\frac{1}{45}}\frac{45^2}{e^2}\geq n^{-\frac{1}{15}}.
\qedhere
\end{align*}
\end{proof}
\begin{lem}\label{lowhard}
Every strategy  w.h.p.\@  pays for every $k\in[n^{0.5},n]$ at least
$ \frac{k^{0.8}}{ 10^5n\log^{0.8}n}$ in order to purchase $k$ $P_3$'s in the $POM$ model. Hence, for $k\in [n^{0.5},n]$ we have,  $  K_{k,P_3}^{POM}\geq \frac{k^{0.8}}{ 10^5n\log^{0.8}n}.$
\end{lem}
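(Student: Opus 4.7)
The plan is to work on the intersection of the high-probability events $A$ and $B$ from Lemmas \ref{sec} and \ref{subgr} and prove, deterministically on this intersection, that any edge set $H$ that spans at least $k$ distinct $P_3$'s must have $c(H) \geq k^{0.8}/(10^5 n \log^{0.8} n)$.

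The first step is a reduction: if $H$ contains any subgraph with at least $90$ edges and average degree exceeding $90$, event $B$ immediately forces $c(H) \geq n^{-1/15}$, which dominates the desired bound for $k \leq n$. We may therefore assume no such dense subgraph exists in $H$, in which case $H$ is $90$-degenerate and we can orient its edges along a degeneracy order with out-degree at most $90$. The key algebraic estimate in this sparse regime is a Cauchy--Schwarz computation on the oriented edges: starting from $k \leq \sum_{bc \in E(H)} \deg_H(b)\deg_H(c)$ and squaring, one obtains $k^2 \leq 90 \bigl(\sum_v \deg_H(v)^2\bigr)\bigl(\sum_v \deg_H(v)^3\bigr) \leq 90 D \bigl(\sum_v \deg_H(v)^2\bigr)^2$, where $D = \max_v \deg_H(v)$; this yields $\sum_v \deg_H(v)^2 \geq k/(C\sqrt{D})$ for an absolute constant $C$.

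Next, partition the vertices of $H$ dyadically by degree, $V_i = \{v : 2^{i-1} \leq \deg_H(v) < 2^i\}$, with $\alpha_i = |V_i|$. Since $\sum_v \deg_H(v)^2 \asymp \sum_i \alpha_i 2^{2i}$, a pigeonhole over the $O(\log n)$ levels produces some $i^*$ with $\alpha_{i^*} 2^{2i^*} \gtrsim k/(\sqrt{D}\log n)$. The number of edges of $H$ incident to $V_{i^*}$ is at least $\alpha_{i^*} 2^{i^*-2}$, so provided $2^{i^*} \geq 2\log^2 n$, Lemma \ref{sec} applied with $F = V_{i^*}$ gives $c(H) \geq \alpha_{i^*} 2^{2i^*}/(160 n) \gtrsim k/(n\sqrt{D}\log n)$. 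Combining this with the direct bound $c(H) \geq D^2/(10 n)$ from Lemma \ref{sec} applied to a maximum-degree vertex (valid whenever $D \geq \log^2 n$), and optimizing the maximum over $D$, the two expressions coincide at $D \asymp (k/\log n)^{2/5}$ and evaluate to $\Theta(k^{0.8}/(n\log^{0.8}n))$; in every regime of $D$ one of the two bounds exceeds the target. The hypothesis $k \geq n^{1/2}$ is exactly what guarantees that this critical $D$ exceeds $\log^2 n$, so that Lemma \ref{sec} is indeed applicable at the chosen single-vertex or dyadic level.

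The main obstacle will be the residual case in which the pigeonhole lands on a dyadic level $i^*$ with $2^{i^*} < 2\log^2 n$, for which Lemma \ref{sec} cannot be directly invoked. To resolve this I would split the sum $\sum_v \deg_H(v)^2$ into its low-degree contribution (bounded via $\sum_{v:\deg_H(v) < 2\log^2 n} \deg_H(v)^2 \leq 4\log^4 n \cdot |V(H)|$) and its high-degree contribution, and argue that the hypothesis $k \geq n^{1/2}$ together with the $90$-degeneracy bound $|E(H)| \leq 90|V(H)|$ forces the high-degree contribution to carry at least half of the total mass $k/(C\sqrt{D})$, so that the pigeonhole can always be performed on levels with $2^i \geq 2\log^2 n$. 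The bookkeeping of constants through all these steps, ultimately collapsed into the factor $10^5$ in the statement, is the final technical task.
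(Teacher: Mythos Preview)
Your overall architecture---condition on $A\cap B$, use $B$ to pass to a $90$-degenerate $H$, then use $A$ at a well-chosen vertex set---is sensible, and the Cauchy--Schwarz/orientation step producing $\sum_v \deg_H(v)^2 \ge k/(C\sqrt{D})$ is correct and rather clean. Where it works, your dyadic pigeonhole combined with the single-vertex bound $c(H)\ge D^2/(10n)$ does reproduce the $k^{0.8}/(n\log^{0.8}n)$ target. This is genuinely different from the paper's route, which is a four-case split on how many internal vertices of each $P_3$ are ``large'' (degree $\ge 2\log^2 n$) and, in the hardest case, a recursive degree-peeling argument reminiscent of the $90$-degeneracy order you use implicitly.

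However, your proposed resolution of the ``main obstacle'' is wrong, and this is a real gap. The claim that the high-degree part of $\sum_v \deg_H(v)^2$ must carry at least half the mass $k/(C\sqrt D)$ fails already for the most basic example: take $H$ to be a union of $m$ vertex-disjoint $P_3$'s, with $m\ge n^{1/2}$. Then $k=m$, $D=2$, $|V(H)|=4m$, and $\sum_v\deg_H(v)^2=10m$; every vertex has degree $\le 2 < 2\log^2 n$, so the high-degree contribution is zero. Your bound $4\log^4 n\cdot|V(H)|=16m\log^4 n$ on the low-degree part is much larger than $k/(C\sqrt D)\asymp m$, so the splitting gives nothing. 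More fundamentally, for such $H$ no choice of $F$ satisfies the hypothesis $\beta\ge\alpha\log^2 n$ of Lemma~\ref{sec}, so event $A$ alone cannot lower-bound $c(H)$ at all.

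This is precisely why the paper needs its Case~2, which is \emph{not} a consequence of events $A$ or $B$: when every $P_3$ lives on small-degree vertices one first extracts $\Theta(k/\log^6 n)$ vertex-disjoint $P_3$'s, then runs a direct union bound (inequality~(\ref{un})) over all ways to realise such a disjoint family as $E_1\cup E_2\cup E_3$ (or $E_1\cup E_4$). That argument is an additional probabilistic input, not something you can extract from $A\cap B$. Your plan needs an analogous ingredient for the regime $D<2\log^2 n$; without it, the proof does not go through.
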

\begin{proof}
Suppose we  implement a strategy $T$ in the $POM$ model and let $\mathcal{P}$ be the set of all $P_3$'s that are purchased.
Call a vertex $v$ an \emph{L-vertex} if at least $2\log^2 n$ edges incident to $v$ are purchased (L for large). Otherwise, call $v$ an \emph{S-vertex}.
 In order to lower bound the cost that $T$ may pay, we consider 4 cases based on the vertices that are incident to $P_3$'s in $\mathcal{P}$. In the cases 1, 3 and 4 we condition on the event $A\cap B$. In the event that $A\cap B$ does not occur we may assume that $T$ pays nothing. However Lemmas \ref{sec} and \ref{subgr} imply that $A\cap B$ does not  occur with probability o(1).
\vspace{3mm}
\\{\bf Case 1:} There exists $T_1\subset \mathcal{P}$ such that $\vert T_1 \vert \geq \frac{k}{4}$ and every $P_3$ in $T_1$ is adjacent to 1 or 2 L-vertices.
\vspace{3mm}
\\Let $L_1$ be the set of L-vertices that are adjacent to a $P_3$ in $T_1$ and $H$ be the set of edges that are adjacent to  a vertex in $L_1$. Set $\vert L_1\vert =r$ and $\vert H \vert  =s$. Every vertex in $L_1$ has degree at least $2\log^2n$ thus, the sum of the degrees of vertices of $L_1$ , $D(L_1)$, is at least $2r\log^2n$. On the other hand, $D(L_1)$ is at most $2s$ since every edge in $H$ contributes to the degree of at most two vertices in $L_1$. Hence,
 $s\geq r \log^2n$. Furthermore, a $P_3$ in $T_1$  is incident to 1 or 2 L-vertices, therefore
$ r \geq \frac{k}{8}$. Hence, since event $A$ occurs (see Lemma \ref{sec}) with $\alpha=r$ and $\beta= s$ we have,
$$c(H) \geq \frac{s^2}{10nr} \geq \frac{r^2\log^4 n}{10nr} \geq \frac{k\log^4 n}{80n} \geq \frac{k}{n}.$$
\\{\bf Case 2:} There exists $T_2 \subset \mathcal{P}$ such that $\vert T_2\vert \geq \frac{k}{4}$ and any $P_3$ in $T_2$ is adjacent only to S-vertices.
\vspace{3mm}
\\Any edge whose one of its endpoints is an  S-vertex can be adjacent to or part of at most $ 2\cdot(2\log^2 n)^3+
[2\cdot(2\log^2 n)^2+ (2\log^2 n)^2]\leq 20 \log^6 n$  $P_3'$s. Hence, we can construct by picking a $P_3$ in $T_2$ and then deleting from $T_2$ all $P_3$'s that it either intersects or is adjacent to, a set $T_2'$ of at least $\frac{k}{80\log^6 n} (=h)$ vertex disjoint $P_3$'s that are incident  only to S-vertices.
\vspace{3mm}
\\Let $\mathcal{F}_{T_2}$ be the set of
all the sets of edges that span at least $h$ vertex disjoint $P_3$'s. For every
$F\in\mathcal{F}_{T_2}$ we associate a quadruple of subsets of $F$ that are as follows  (if there is more than one choice for such a quadruple we pick one of them at random). The first set, $E_1$, is a set of disjoint edges. The second set, $E_2$, consists of edges adjacent to edges in $E_1$ such that $E_1\cup E_2$ does not span a $P_3$. The third set, $E_3$, consists of edges adjacent to edges in $E_1\cup E_2$ such that $E_1\cup E_2 \cup E_3$ spans a set, $\mathcal{P}_{123}$, $\vert E_3 \vert$ vertex disjoint $P_3$'s. Moreover every $P_3$ in $\mathcal{P}_{123}$ consists of an edge from each of $E_1, E_2$ and $E_3$. The last set of edges, $E_4$, is such that $E_1\cup E_4$ spans a set, $\mathcal{P}_{14}$, of $\vert E_4\vert$ vertex disjoint $P_3$'s. Finally,  $\mathcal{P}_{123}\cup \mathcal{P}_{14}$ is also a set of vertex disjoint $P_3$'s and $\vert \mathcal{P}_{123} \vert + \vert \mathcal{P}_{14}\vert =\vert E_3\vert + \vert E_4 \vert \geq h$. An example of how we may associate edges to those 4 sets in the case $h\in[3]$ is  illustrated in Figure 1.
\begin{figure}[!h]
\centering
\begin{tikzpicture}[xscale=1]
\draw (0,0)--(0,1); \draw(0,1)--(0,2); \draw(0,1)--(1,0); \draw(1,0)--(1,1); 
\draw[fill] (0,1) circle [radius=0.03]; \draw[fill] (0,2) circle [radius=0.03];
\draw[fill] (1,0) circle [radius=0.03]; \draw[fill] (1,1) circle [radius=0.03];
\draw[fill] (0,0) circle [radius=0.03];

\draw(2.25,0.5)--(2.25,1.5); \draw(1.5,0)--(2.25,0.5)--(3,0);
\draw(2.25,1.5)--(2.25,2); \draw(1.5,2)--(2.25,1.5)--(3,2);
\draw[fill] (2.25,0.5) circle [radius=0.03]; \draw[fill] (2.25,1.5) circle [radius=0.03];
\draw[fill] (1.5,0) circle [radius=0.03]; \draw[fill] (3,0) circle [radius=0.03];
\draw[fill] (2.25,2) circle [radius=0.03]; \draw[fill] (1.5,2) circle [radius=0.03];
\draw[fill] (3,2) circle [radius=0.03];

\draw (3.5,2)--(4,1) -- (4,0)--(5.25,0)--(5.25,1)--(5.75,2);
\draw(4,2)--(4,1)--(4.5,2); \draw(4.75,2)--(5.25,1);
\draw[fill] (3.5,2) circle [radius=0.03]; \draw[fill] (4,1) circle [radius=0.03];
\draw[fill] (4,0) circle [radius=0.03]; \draw[fill] (5.25,0) circle [radius=0.03];
\draw[fill] (5.25,1) circle [radius=0.03]; \draw[fill] (5.75,2) circle [radius=0.03];
\draw[fill] (4.75,2) circle [radius=0.03]; \draw[fill] (4,2) circle [radius=0.03];
\draw[fill] (4.5,2) circle [radius=0.03];

\draw(5.25,0)--(5.75,0); \draw[fill] (5.75,0) circle [radius=0.03];

\draw[ultra thick, ->] (6,1)--(7,1);

\draw[white](7.25,0)--(7.25,1);  \draw[dotted, ultra thick](7.25,1)--(7.25,2);
\draw[dotted, ultra thick][red](7.25,1)--(8.25,0); \draw[dotted, ultra thick](8.25,0)--(8.25,1); 
\draw[fill] (7.25,1) circle [radius=0.03]; \draw[fill] (7.25,2) circle [radius=0.03];
\draw[fill] (8.25,0) circle [radius=0.03]; \draw[fill] (8.25,1) circle [radius=0.03];
\draw[fill] (7.25,0) circle [radius=0.03];

\draw[dashed, ultra thick](9.5,0.5)--(9.5,1.5); \draw[white](8.75,0)--(9.5,0.5);
\draw[dashed, ultra thick][green](9.5,0.5)--(10.25,0); \draw[blue](9.5,1.5)--(9.5,2);
\draw[dashed, ultra thick][blue](8.75,2)--(9.5,1.5); \draw[white](9.5,1.5)--(10.25,2);
\draw[fill] (9.5,0.5) circle [radius=0.03]; \draw[fill] (9.5,1.5) circle [radius=0.03];
\draw[fill] (8.75,0) circle [radius=0.03]; \draw[fill] (10.25,0) circle [radius=0.03];
\draw[fill] (9.5,2) circle [radius=0.03]; \draw[fill] (8.75,2) circle [radius=0.03];
\draw[fill] (10.25,2) circle [radius=0.03];

\draw [blue](10.75,2)--(11.25,1); \draw[dashed, ultra thick](11.25,1)-- (11.25,0);
\draw[dashed, ultra thick][green](11.25,0)--(12.5,0); \draw[white](12.5,0)--(12.5,1)--(13,2);
\draw[blue](11.25,2)--(11.25,1); \draw[dashed, ultra thick][blue](11.25,1)--(11.75,2);
\draw(12,2)--(12.5,1); \draw[fill] (10.75,2) circle [radius=0.03];
\draw[fill] (11.25,1) circle [radius=0.03]; \draw[fill] (11.25,0) circle [radius=0.03];
\draw[fill] (12.5,0) circle [radius=0.03]; \draw[fill] (12.5,1) circle [radius=0.03];
\draw[fill] (13,2) circle [radius=0.03]; \draw[fill] (12,2) circle [radius=0.03];
\draw[fill] (11.25,2) circle [radius=0.03]; \draw[fill] (11.75,2) circle [radius=0.03];

\draw(12.5,0)--(13,0); \draw[fill] (13,0) circle [radius=0.03];
\end{tikzpicture}
\caption{We may associate the edges at the left with the sets $E_1$,...,$E_4$ as illustrated at the right. The black (blue, green and red respectively) edges are associated with $E_1$ ($E_2$, $E_3$ and $E_4$ resp.). Any edge that is not associated with any of the 4 sets in not present at the right. The $P_3$'s that  are spanned by the dashed (dotted respectively) edges lie in  $\mathcal{P}_{123}$  ($\mathcal{P}_{14}$ resp.).}
\end{figure}
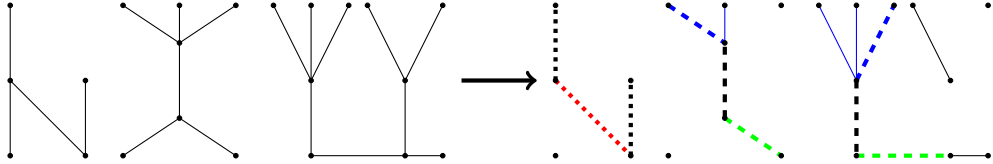  
\begin{rem}
The probabilities that follow are evaluated over the sample space $\Omega=\{$ $\text{$T$ purchases } F: F\subseteq E(G)\}$. In the case that $T$ purchases an element of $F\in\mathcal{F}_{T_2}$ then $E_1,...,E_4$ are the sets that are associated with $F$. Otherwise, we set $c(E_1)=...=c(E_4)=\infty$.
\end{rem}
 For $i\in [4]$ we  set $n_i=\vert E_i\vert$.
For fixed $n_1$, by taking union bound over all $\binom{N }{n_1}$ choices for $E_1$, (\ref{un}) implies that
\begin{align}\label{ce1}
\begin{split}
    Pr\bigg(c(E_1) &\leq \frac{n_1^2}{10n^2} \bigg) \leq
    \binom{N}{n_1}
    \frac{\big(\frac{n_1^2}{10n^2} \big)^{n_1}}{n_1!} \leq
    \bigg(\frac{eN}{n_1}\bigg)^{n_1}
\bigg(\frac{en_1}{10n^2} \bigg)^{n_1} \leq \bigg(\frac{e^2}{10} \bigg)^{n_1}.
\end{split}
\end{align}
In addition, for fixed $n_4$ and $H\subseteq E(G)$, conditioned on the event $\{E_1=H\}$, $E_4$ is a subset of the edges that are spanned by the endpoints of edges in $H(= E_1)$; hence, of at most $\binom{2n_1}{ 2}$ edges. Thus, by taking union bound over all the possibilities for $E_4$, (\ref{un}) implies that
\begin{align}\label{ce2}
\begin{split}
    Pr\bigg(c(E_4) &\leq \frac{n_4^2}{20n_1^2}\bigg|E_1=H \bigg) \leq
    \binom{\binom{2n_1}{2}}{n_4}
    \frac{\big(\frac{n_4^2}{20n_1^2} \big)^{n_4}}{n_4!} \leq
    \bigg(\frac{2en_1^2}{n_4}\bigg)^{n_4}
\bigg(\frac{en_4}{20n_1^2} \bigg)^{n_4} \leq \bigg(\frac{e^2}{10} \bigg)^{n_4}.
\end{split}
\end{align}
Recall that $n_4 \leq n_1$. Thus, by taking union bound over $n_1$ and $n_4$, (\ref{ce1}) and (\ref{ce2}) imply that
\begin{align*}
    P&r\bigg(\bigg\{c(E_1)+c(E_4) \leq \frac{n_1^2}{10n^2}+ \frac{n_4^2}{20n_1^2}\bigg\} \wedge \bigg\{n_4\geq \frac{h}{2}\bigg\} \bigg)
    \\ &\leq \sum_{n_1=h/2}^{N}\sum_{n_2=h/2}^{N} \Bigg [
     Pr\bigg(c(E_1) \leq \frac{n_1^2}{10n^2} \bigg) + \sum_{\substack{H\subset E(G):\\|H|=n_1}}
     Pr\bigg( c(E_4) \leq \frac{n_4^2}{20n_1^2}\bigg\vert E_1=H \bigg)Pr(E_1=H)\Bigg] \\
      &\leq \sum_{n_1=h/2}^{N}\sum_{n_2=h/2}^{N} \Bigg[
    \bigg(\frac{e^2}{10} \bigg)^{n_1} + \sum_{{H\subset E(G):|H|=n_1}}
      \bigg(\frac{e^2}{10} \bigg)^{n_4}Pr(E_1=H)\Bigg]\\
     &\leq \sum_{n_1=k/(160\log^6n)}^{N}\sum_{n_2=k/(160\log^6n)}^{N} \Bigg[
    \bigg(\frac{e^2}{10} \bigg)^{n_1} +
      \bigg(\frac{e^2}{10} \bigg)^{n_4}\Bigg] \leq n^4\cdot2 \bigg(\frac{e^2}{10} \bigg)^{(1+o(1))n^{0.5}} =o(1).
\end{align*}
The fact that $E_1$ and $E_4$ are disjoint implies that $c(\mathcal{P}_{14})\geq c(E_1)+c(E_4)$.
Therefore, the calculation above implies that with probability o(1) \emph{either} case 2 does not occur \emph{or} $n_4< h/2$ \emph{or} the following inequality does not hold.
\begin{align*}
c(\mathcal{P}_{14})\geq c(E_1)+c(E_4) \geq \frac{n_1^2}{10n^2}+ \frac{n_4^2}{20n_1^2}\geq 2\frac{n_1}{\sqrt{10}n}\frac{n_4}{\sqrt{20}n_1} \geq \frac{n_4}{10n} \geq \frac{h}{20n} =\frac{k}{1600n\log^6n}.
\end{align*}
Note that if case 2 occurs then, $n_4 < h/2$ implies that $n_3\geq h/2$. Consequently, we have that  $n_1, n_2, n_3\geq n_3\geq h/2$. As before, the edges in $E_1$ are chosen from all edges in $E(G)$. Thereafter, for $H_1,H_2\subseteq E(G)$ conditioned on the event $\{E_1=F_1\}$ (and on the event $\{E_1=F_1 \wedge E_2=F_2\}$ respectively), the edges in $E_2$ ($E_3$ resp.) are chosen from all the edges that are incident to an edge in $H_1$ ($H_1\cup H_2$ resp.); hence, from at most $2n_1n$ edges  ($2n(n_1+n_2)$ resp.). Therefore, by using the same method as before and the fact that $E_1$ $E_2$ and $E_3$ are disjoint we get that with probability o(1) \emph{either} case 2 does not occur \emph{or} $n_3 < h/2$ (which implies that either $n_4\geq h/2$ or case 2 does not occur) \emph{or} the following inequality does not hold.
\begin{align}\label{cooos}
c(\mathcal{P}_{123})\geq c(E_1)+c(E_2)+c(E_3) \geq \frac{n_1^2}{10n^2}+\frac{n_2^2}{20n_1n}+\frac{h^2}{160n_2n}.
\end{align}
By setting the partial derivatives of the right hand side to zero we get $n_1^3=n_2^2n/4$ and $n_{2}^3=n_1h^2/16$. Solving for $n_1,n_2$ and substituting those values in (\ref{cooos}) we get
$$c(\mathcal{P}_{123}) \geq 0.04\bigg(\frac{h}{n}\bigg)^{\frac{8}{7}} \geq10^{-5}\bigg(\frac{k}{n\log^6 n}\bigg)^{\frac{8}{7}}.$$
Therefore, with probability o(1) case 2 occurs and $T$ pays less than $$\min\bigg\{\frac{k}{1600n\log^6n}, 10^{-5}\bigg(\frac{k}{n\log^6 n}\bigg)^{\frac{8}{7}} \bigg\}.$$
\\{\bf Case 3}: There exists $T_3\subset \mathcal{P}$ such that $\vert T_3 \vert \geq \frac{k}{4}$ and every $P_3$ in $T_3$ has exactly one of its internal vertices being an S-vertex.
\vspace{3mm}
\\Let $V_L=\{w_1,...,w_q\}$ be the set of L-vertices that are internal vertices of $P_3$'s in $T_3$. For $i\in [q]$ let $d_i$ be the degree of $w_i$. Furthermore, let $E_L$ to be the edges that are incident to a vertex in $V_L$ and are purchased by $T$.
An edge may contribute to the degree of at most two vertices in $V_L$; hence, $\vert E_L \vert\geq  (d_1+...+d_q)/2 \geq q(2\log^2n)/2=q\log^2n$.
Therefore, since event $A$ occurs (with $F=V_L, H=E_L$, $\alpha=q$ and $\beta= (d_1+...+d_q)/2$)  we get that
\begin{align}\label{cas22}
c(T_3) \geq  c(E_L) \geq \frac{\big(\frac{d_1+...+d_q}{2}\big)^2}{10qn} \geq \frac{d_1^2+...+d^2_q}{40qn}.
\end{align}
Furthermore, a vertex of degree $d$ can be an internal vertex of at most $d(d-1)(2\log^2 n )$  $P_3$'s whose other internal vertex is an S-vertex. Hence, if we let for $i\in [q]$ $c_i$  be the number of $P_3$'s in $T_3$ that have $w_1$ 
as an internal vertex we have, 
\begin{align}\label{cas21}
\frac{k}{4}\leq |T_3|\leq \sum_{i\in[q]}c_i \leq  2(d_1^2+...+d^2_q)\log^2n.
\end{align}
(\ref{cas22}),(\ref{cas21}) imply,
$$c(T_3) \geq \frac{k}{320 n \log^2 n}.$$
\\{\bf{Case 4}}: There exists $T_4\subset \mathcal{P}$ such that $\vert T_4 \vert \geq \frac{k}{4}$ and every $P_3$ in $T_4$ has both of its internal vertices being  L-vertices.
\vspace{3mm}
\\Either $T$ purchases a subgraph $H\subseteq G$ that consists of at least 90 edges and has average degree larger at most 90  or $T$ does not purchase such a subgraph. In the first case, since event $B$ occurs, we have that $c(H)\geq n^{-\frac{1}{15}}$; hence, $T$  pays at least $n^{-\frac{1}{15}}$. Thus, suppose that every subgraph $H$ of $G$ that is purchased and consists of at least 90 edges has average degree at most 90.
\vspace{3mm}
\\Let $A_0$ be the set of all  L-vertices in $G$. Furthermore let $B_1$ be the subset of L-vertices  of $A_0$ that have  at most 179 neighbours in $A_0$ and $A_1=A_0/B_1$ (i.e.\@ the subset of vertices  of $A_0$ that have  at least 180 neighbours in $A_0$).
\vspace{3mm}
\\Recursively for $i \in [\log n]$ given $A_{i-1}$   we let $B_{i}$ to be the subset of vertices  of $A_{i-1}$ that have at most 179 neighbours in $A_{i-1}$ and $A_{i}=A_{i-1}/B_{i}$.
\vspace{3mm}
\\For $i \in [\log n]$ let $H_i$ be the subgraph with vertex set $A_i\cup B_i$ and edge set all the edges that have been purchased and are spanned by $A_i\cup B_i$. Then, either $\vert E(H_i)\vert < 90$ or $\vert E(H_i) \vert \geq 90$. In the case that
$\vert E(H_i)\vert < 90$ we have that  no vertex in $V(H_i)$ has degree more than 179; therefore, $A_i = \emptyset$. On the other hand, if   $\vert E(H_i) \vert \geq  90$ then, since the average degree of the vertices in $H_i$ is at most 90 and at most half of the vertices of $H_i$ can have degree greater or equal to  two times the average degree of $H_i$ (in our case 180), we get that $\vert A_i \vert \leq 0.5 \vert V(H_i) \vert = 0.5 \vert A_{i-1} \vert.$  In both cases we get $\vert A_{i} \vert \leq \ 0.5 \vert A_{i-1} \vert $; hence, $\vert A_{\log n}\vert \leq 1$.
\vspace{3mm}
\\Observe that for each $P_3$ in $T_4$ there exists exactly one $i$ such that $B_i$ contains one of its internal vertices and the other one is found either in $A_i$ or in  $B_i$. In light of this observation, we  partition  $T_4$ into $2\log n$ sets as follows. For $i \in [\log n]$ and $C_i\in \{A_i,B_i\}$ we set $T_4(B_i,C_i):=\{(p_1,p_1p_2,p_2,p_2p_3,p_3,p_3p_4,p_4)\in T_4: p_2\in B_i \wedge p_3 \in C_i\text{ \emph{or} }  p_2\in C_i \wedge p_3 \in B_i\}.$ Since $T_4$ is the union of all $T_4(A_i,C_i)$'s and consists of at least $\frac{k}{4}$ $P_3$'s, we have that one of the $T_4(B_i,C_i)$'s must consist of at least  $\frac{k}{8\log n}$ $P_3$'s. Let $i=\min\big \{j\in [\log n]: \vert T_4(B_j,A_j)\vert \geq \frac{k}{8\log n} \text{ \emph{or} } \vert T_4(B_j,B_j)\vert \geq \frac{k}{8\log n} \big\}$
\vspace{3mm}
\\{\bf{Sub-case 1}}:  $T_4(B_i,A_i)\geq \frac{k}{8\log n}$. 
\vspace{3mm}
\\Let $A_i=\{a_{i,1},...,a_{i,n_i}\}$. For $j \in [n_i]$ we let $D_{i,j}$  be the set of edges that are purchased and are incident to $a_{i,j}$. In addition, we let $Q_{i,j}$  be the set of edges that are purchased and are incident to a neighbour of $a_{i,j}$ that lies in $B_i$ and do not lie in $D_{i,j}$ (illustrated in Figure 2).
\begin{figure}[!h]
\centering
\begin{tikzpicture}[yscale=1]
\draw[dashed] (7,1) circle [radius=0.8]; \node at (7,2.1) {$B_i$};
\draw[fill] (4,1) circle [radius=0.04]; \node at (4.1,0.75) {$a_{i,j}$};
\draw[fill] (6.9,1.6) circle [radius=0.03]; \draw[fill] (7.1,1.3) circle [radius=0.03];
\draw[fill] (6.9,1) circle [radius=0.03]; \draw[fill] (7.1,0.7) circle [radius=0.03];
\draw[fill] (6.9,0.4) circle [radius=0.03]; \draw[blue] (4,1)--(6.9,1.6);
\draw[blue] (4,1)--(7.1,1.3); \draw[blue] (4,1)--(7.1,0.7);
\draw[fill] (3,1.8) circle [radius=0.03]; \draw[fill] (3,1.5) circle [radius=0.03];
\draw[fill] (3,1.2) circle [radius=0.03]; \draw[fill] (3,0.9) circle [radius=0.03];
\draw[fill] (3,0.6) circle [radius=0.03]; \draw[fill] (3,0.3) circle [radius=0.03];
\draw[blue] (4,1)--(3,1.8); \draw[blue] (4,1)--(3,1.5);
\draw[blue] (4,1)--(3,1.2); \draw[blue] (4,1)--(3,0.9);
\draw[blue] (4,1)--(3,0.6); \draw[blue] (4,1)--(3,0.3);
\draw[fill] (8.5,1.8) circle [radius=0.03]; \draw[fill] (8.5,1.6) circle [radius=0.03];
\draw[green] (8.5,1.8)--(6.9,1.6)--(8.5,1.6);

\draw[fill] (8.5,0.4) circle [radius=0.03]; \draw[fill] (8.5,1) circle [radius=0.03];
\draw[fill] (8.5,0.8) circle [radius=0.03]; \draw[fill] (8.5,0.6) circle [radius=0.03];
\draw[green] (8.5,1)--(7.1,0.7)--(8.5,0.8); \draw[green] (8.5,0.4)--(7.1,0.7)--(8.5,0.6);

\end{tikzpicture}
\caption{ The edges in $D_{i,j}$ are in blue and the edges in $Q_{i,j}$ are in green.}
\end{figure}
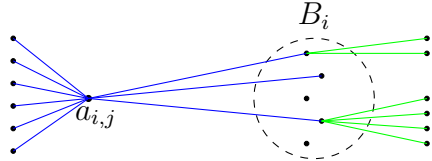
\\Observe that  $Q_{i,j}\cup D_{i,j}$
spans all $P_3$'s whose  one of their internal vertices is $a_{i,j}$ while the other one lies in $B_i$. Thus, if we let
 $q_{i,j}=\vert Q_{i,j}\vert$, $d_{i,j}=\vert D_{i,j} \vert$ and $p_{i,j}$  be the number of $P_3$'s that are spanned by $Q_{i,j} \cup D_{i,j}$, we have that $p_{i,j} \leq q_{i,j}d_{i,j}$. This is because each $P_3$ that is spanned by $Q_{i,j}\cup D_{i,j}$ is determined by its end-edges, one of which is found in $Q_{i,j}$ and the other one in $D_{i,j}$.
 \vspace{3mm}
\\In the case that $q_{i,j} \geq d_{i,j} \log^2 n$ since event $A$ occurs (see Lemma \ref{sec}), with $F$ consisting of the neighbours of $a_{i,j}$, $H=Q_{i,j}$, $\alpha= d_{i,j}$ and $\beta = q_{i,j}$ (and $F=\{a_{i,j}\}$, $H=D_{i,j}$, $\alpha=1$ and $\beta=d_{i,j}$ resp.) we get that $
c(Q_{i,j}) \geq {q_{i,j}^2}/{10d_{i,j}n}$ ( and
$c(D_{i,j}) \geq {d_{i,j}^2}/{10n}$ resp.). Hence,
\begin{align}\label{lol}
\begin{split}
2c(Q_{i,j}\cup D_{i,j}) &\geq c(Q_{i,j})+c(D_{i,j}) \geq \frac{q_{i,j}^2}{10d_{i,j}n}+\frac{d_{i,j}^2}{10n}
\geq \frac{p_{i,j}^2}{10d_{i,j}^3n}+\frac{d_{i,j}^2}{10n} \\
&\geq  \frac{p_{i,j}^2}{10\alpha^3n}+\frac{\alpha^2}{10n} \geq 0.1 \frac{p_{i,j}^{0.8}}{n},
\end{split}
\end{align}
where $\alpha=\bigg(\frac{3p^2_{i,j}}{2}\bigg)^{0.2}$ is chosen in order to minimise the expression $ \frac{p_{i,j}^2}{10d_{i,j}^3n}+\frac{d_{i,j}^2}{10n} $ over $d_{i,j}$.
\\On the other hand, in the case that $q_{i,j} \leq d_{i,j}\log^2n$ since $p_{i,j}\leq q_{i,j}d_{i,j}$ and event $A$ occurs, we have,
$$c(Q_{i,j}\cup D_{i,j}) \geq c(D_{i,j}) \geq \frac{d_{i,j}^2}{10n}\geq\frac{d_{i,j} q_{i,j}}{10n\log^2n}
\geq \frac{p_{i,j}}{10n\log^2n} .$$
In order to calculate the total cost of purchasing the set of edges spanned by the paths in $T_4(B_i,A_i)$, recall that each vertex in $B_i$ is adjacent to at most 179 vertices in $A_i$. Hence, an edge is found in at most 179 $Q_{i,j}$'s  and two $D_{i,j}$'s. Therefore, (\ref{lol}) implies
\begin{align}\label{cop}
181 c(T_4) \geq \sum_{j=1}^{n_i} c(Q_{i,j} \cup D_{i,j}) \geq \sum_{j=1}^{n_i} 0.05\frac{p_{i,j}^{0.8}}{n}  \geq0.05\frac{\big( \sum_{j=1}^{n_i} p_{i,j}\big)^{0.8}}{n},
\end{align}
where the last inequality follows from the concavity of the function $f(x)=x^{0.8}$. Combining (\ref{cop}) with the fact that  $\vert T_4(B_i,A_i) \vert = \underset{j=1}{\overset{n_i}{\sum}}p_{i,j} \geq
\frac{k}{8\log n} $ we get,
\begin{align*}
    c(T_4) \geq 10^{-4} \cdot \frac{\big( \sum_{j=1}^{n_i} p_{i,j}\big)^{0.8}}{n}
    \geq { 10^{-5}}\cdot\frac{k^{0.8}}{n\log^{0.8}n}.
\end{align*}
\\{\bf{Sub-case 2}}: $T(B_i,B_i)\geq \frac{k}{8\log n}$ .
\vspace{3mm}
\\We let $B_i'$ be the subset of vertices of $B_i$ that have a neighbour in $B_i$. Thereafter, sub-case 2 follows similarly to sub-case 1 (replace both sets $A_i$ and $B_i$ with $B_i'$ ).
\vspace{5mm}
\\Summarising the cost of purchasing $k\geq n^{0.5}$ $P_3$'s  paid by any strategy is w.h.p.\@ at least
\begin{align*}
& \min\bigg\{
\frac{k}{n},
 \frac{k}{1600n\log^6 n},
 10^{-5}\bigg(\frac{k}{n\log^6n}\bigg)^{\frac{8}{7}}
\frac{k}{320  n \log^2 n},n^{-\frac{1}{15}},
\frac{k^{0.8}}{10^5n\log^{0.8}n} \bigg\}
=\frac{k^{0.8}}{10^5n\log^{0.8}n}.
\qedhere
\end{align*}
\end{proof}
\section{Purchasing a $C_4$}
\emph{To upper bound $K_{C_4}^{ROM}$} we follow the strategy described in Lemma \ref{upp} in order to purchase $k$  $P_3$'s using only the first $\frac{2N}{3}$ edges  and paying in expectation less than $4\frac{k^{0.8}}{n}$. Note that the $k$ $P_3$'s that we purchase have distinct pairs of endpoints.
Let $END$ be the set of edges that join those pairs of endpoints and are included in  the last $\frac{N}{3}$ edges. Then, we want to buy a single edge from $END$. In order to bound the minimum  expected cost of purchasing an edge from $END$ we proceed by bounding from below the cardinality of $END$.
\vspace{3mm}
\\With $k=n_1n_2$ there at most $t=\binom{n }{ n_1}\binom{ n }{ n_2}$ trees with root the vertex 1, $n_1$ vertices at depth 1 and $n_2$ vertices at depth 2. For each such tree there are at
most $d=\binom{k}{0.9k}\binom{N-0.9k}{\frac{2}{3}N-0.9k}=\binom{k}{0.1k}\cdot\binom{N-0.9k}{\frac{2}{3}N-0.9k}$
ways to choose $2N/3$ out of $N$ edges  such that at most $0.1k$ edges of the edges in $END$ are not chosen. Hence for $k=n^{\Omega(1)}$, with $n_1\approx (3k^2/2)^{0.2}$ (as found in Lemma \ref{upp}),  the probability that $END \leq  0.1k (=n_1n_2)$ is bounded by
\begin{align*}
{t\cdot d}\bigg/\binom{N}{ \frac{2}{3}N} 
&\leq n^{n_1}n^{n_2}\bigg(\frac{ek}{0.1k}\bigg)^{0.1k}\frac{(N-0.9k)!}{(\frac{2}{3}N-0.9k)!(\frac{1}{3}N)!} \bigg/ \frac{N!}{(\frac{2}{3}N)!(\frac{1}{3}N)!} \\
& \leq n^{n_1+n_2}(10e)^{0.1k}\prod_{i=0}^{0.9k-1}\frac{2N/3-i}{N-i}
\\& \leq   n^{n_1+n_2}(10e)^{0.1k} 0.67^{0.9k} \leq n^{2k^{0.6}}0.98^k=o(1).
\end{align*}
To purchase an edge from $END$ we implement the following strategy. While examining the $\ell$th edge from $END$ we purchase it if it costs at most $\frac{2}{END-\ell}$ and we have not purchased any other edge from $END$.   Denote the expected cost of the strategy above when $\vert END \vert=m$ by $c(m)$. We will show by induction that $c(m)\leq \frac{2}{m}$.
\vspace{3mm}
\\For $m=1$ it is trivial. Assume it is true for  $m=s-1$. Therefore, if $|END|=s$ then, given that we do not purchase the first edge that we examine, the expected amount that we pay equals to the one that we pay when $s-1$ edges are examined, which is at most $c(s-1)$. For $m=s$ let $x_{s}$ to be the first cost that is examined. Set $r_s=\frac{2}{s-1}$ then,
\begin{align*}
    c(s)&= \mathbb{E}\big(x_s \vert x_s\leq r_s\big)P\Big(x_s\leq r_s\big) +
 \mathbb{E}\big(c(s-1)\vert x_s> r_s\big)P\Big(x_s> r_s \big)\\
 &= \frac{1}{s-1}\cdot\frac{2}{s-1}+\frac{2}{s-1}\bigg(1-\frac{2}{s-1}\bigg)=\frac{2}{s-1}\cdot \frac{s-2}{s-1} \leq \frac{2}{s}.
 \end{align*}
 Therefore, our strategy  pays at most $4\frac{k^{0.8}}{n}$ in order to buy a $k \geq n^{0.5}$ $P_3$'s plus at most $\frac{2}{0.1k}$ in expectation in order to purchase the fourth edge of a $C_4$. Hence,
 $$K_4^{ROM} \leq \underset{ k \in[n^{0.5}, n]}{\min}\bigg\{4\frac{k^{0.8}}{n}+\frac{20}{k} \bigg\}\leq 17n^{-\frac{5}{9}}.$$
\emph{To lower bound  $K_{C_4}^{POM}$ } we suppose that we implement a strategy $T$. While executing  $T$ we keep two lists (by only adding elements), one list of $P_3$'s, call it $L_P$ and one list of edges, call it $L_E$. At some stage of the algorithm, suppose that $L_P=\{P_1,...,P_k\}$ and $L_E=\{f_1,...,f_k\}$, then, the following are satisfied. First, $L_P$ consists of all $P_3$'s that are spanned by the purchased edges. Second, if $i<j\leq k$ then the edges that span $P_j$ have not been purchased before the edges that span $P_i$. Finally, the edge  $f_i$ is spanned by the endpoints of $P_i$.
\vspace{3mm}
\\ \emph{Claim}: W.h.p\@ there does not exist $\ell \in [n]$ such that $c(f_\ell) \leq \frac{1}{\ell \log^2 n}$.
\vspace{3mm}
\\\emph{Proof of the claim}: Let $f_1,...,f_n$  be a sequence of edges. Then,
\begin{align}\label{last}
    Pr\bigg(\exists \ell \in [n]: c(f_\ell)\leq \frac{1}{\ell\log^2 n}\bigg) \leq \sum_{\ell=1}^{n}\frac{1}{\ell \log^2n} =O\bigg(\frac{1}{\log n}\bigg).
\end{align}
\\ Observe that $T$ purchases $\ell$ $P_3$'s and then an edge from the set $f_1,...,f_\ell$ for some $\ell \in [n]$.
Hence, since $T$ is an arbitrary strategy, (\ref{last}) and Lemma \ref{lowhard} imply
\begin{align*} K_{C_4}^{POM}
&\geq \underset{}{\min}\Bigg\{ \underset{\ell \geq n^{0.5}}{\min}\bigg\{   \frac{\ell^{0.8}}{ 10^5n\log^{0.8}n}+\frac{1}{\ell\log^2n} \bigg\}, \underset{l<n^{0.5}}{\min}\bigg\{ \frac{1}{\ell\log^2n} \bigg\} \Bigg\} \\
&= \underset{\ell \geq n^{0.5}}{\min}\Bigg\{ \frac{\ell^{0.8}}{ 10^5 n\log^{0.8}n}  +\frac{1}{\ell\log^2n} \Bigg\} \geq 10^{-4}n^{-\frac{5}{9}}\log^{-\frac{4}{3}}n .
\end{align*}
\section{Final Remarks}
In this paper we analysed the minimum expected cost of purchasing a $C_4$ in the $POM$ and $ROM$ settings. In turns out that the two quantities differ by at most a multiplicative  factor of $\log^{\frac{4}{3}}n$. Furthermore, the lower bound that we proved of the cost of purchasing a $C_4$ in the $POM$ model holds w.h.p. In order to get an upper bound of the cost of purchasing a $C_4$ in the $ROM$ model that also holds w.h.p. we may alter our strategy for purchasing an edge from $END$ into the following one. Buy the first edge in $END$ that is examined and costs less than 
$\frac{\log n}{\vert END \vert}$. In this case we purchasing a cheap edge with probability $1-(1-{\log n}/{\vert END \vert })^{\vert END\vert}=1-o(1).$
\vspace{3mm}
\\The bounds that are proved in this paper can be extended to the case where the underlying graphs is $G_{{n,p}}$
with $p$ being constant
(instead of the complete graph on $n$ vertices, $K_n$). 
In this case it is straight forward to see that we can use the same methodology  in order to prove that bounds in the case where our underlying graph $G$ is $G_{n,p}$ are just a factor of $\frac{1}{p}$ larger than the corresponding ones in the case where  $G=K_n$. The $\frac{1}{p}$ factor arise from the fact that at cases of interest only a $p$ factor of  edges are present.
\vspace{3mm}
\\Finally, it would be of some interest to
\begin{itemize}
\item Close the $ O\big(\log^{\frac{4}{3}}n\big)$ gap of the cost of purchasing a $C_4$ in the $POM$ and $ROM$ settings.
    \item Analyse the cost of purchasing multiple $C_4$'s.
    \item Replace $C_4$ by other graphs.
\end{itemize}
\vspace{3mm}
{\large\bf{Acknowledgement:}} I thank Alan Frieze for his comments on the paper.

\end{document}